%%%%%%%%%%%%%%%%%%%%%%%%%%%%%%%%%%%%%%%%%%%%%%%%%%%%%%%%%%%%%%%%%%%%
%
%
%         MEMO: Circle Operation
%           * Use the command `latex2e'
%
%%%%%%%%%%%%%%%%%%%%%%%%%%%%%%%%%%%%%%%%%%%%%%%%%%%%%%%%%%%%%%%%%%%%
%\documentclass{llncs}
\documentclass[11pt]{article}

\usepackage{amsmath,amsgen}
\usepackage{amstext,amssymb,amsfonts,latexsym,mathrsfs}
\usepackage{theorem} \usepackage{pifont}\usepackage[dvips]{graphics,epsfig}

\setlength\paperheight     {297.0mm}
\setlength\paperwidth      {210.0mm}
\setlength\textwidth      {170.0mm}
\setlength\textheight     {242.0mm}
\setlength\oddsidemargin  { 20.0mm}
\addtolength\oddsidemargin{   -1in}
\setlength\evensidemargin {\oddsidemargin}
\setlength\topmargin      { 27.5mm}

\addtolength\topmargin    {   -1in}
\setlength\headheight     {  0.0mm}
\setlength\headsep        {  0.0mm}
\setlength\footskip       {  8.0mm}
\setlength\columnsep      {  10.0mm}
\setlength\parindent      {    1em}

%%%%%%%%%%%%%%%%%%%%%%%%%%%%%%%%%%%%%%%%%
% Footnotes

% \renewcommand{\baselinestretch}{1.15}
%  Abstract

%%%%%%%%%%%%%%%%%%%%%%%%%%%%%%%%%%
 % General abbreviations
%%%%%%%%%%%%%%%%%%%%%%%%%
 % Skip and noindent
 \newcommand{\bs}{\bigskip}
 \newcommand{\ms}{\medskip}
 \newcommand{\n}{\noindent}
 
 \newcommand{\hs}[1]{\hspace*{ #1 mm}}

%%% \newcommand{\co}{\mathrm{co}\mbox{-}}

%new command

%%%%%%%%%%
% Theorems, etc.

\def\bbox{\vrule height6pt width6pt depth1pt}
\theoremstyle{plain}
\theoremheaderfont{\bfseries}
\setlength{\theorempreskipamount}{3mm}
\setlength{\theorempostskipamount}{3mm}
\newtheorem{theorem}{Theorem}[section]
\newtheorem{lemma}[theorem]{Lemma}
\newtheorem{proposition}[theorem]{Proposition}

{\theorembodyfont{\rmfamily}
     }
{\theorembodyfont{\rmfamily} }
{\theorembodyfont{\rmfamily} }

\newenvironment{proof}{\par \noindent
            {\bf Proof. \hs{2}}}{\hfill$\Box$ \vspace*{3mm}}

\newenvironment{proofof}[1]{\vspace*{5mm} \par \noindent
         {\bf Proof of #1.\hs{2}}}{\hfill$\Box$ \vspace*{3mm}}

 \setlength{\leftmargin}{0mm}
 \setlength{\parsep}{1mm}
 \setlength{\itemsep}{1mm}
 \setlength{\itemindent}{1mm}
 \setlength{\topsep}{1mm}
 \setlength{\labelsep}{3mm}
 \setlength{\parskip}{0mm}
 \setlength{\listparindent}{0mm}

 \newcommand{\ignore}[1]{}

%%%%% quantum states
 \newcommand{\ket}[1]{| #1 \rangle}

%%%%%%%%%%%%%%%%%%%%%%%%%%%%%%%%%%%

% End of general abbreviations

%%%%%%%%%%%%%%%%%%%%%%%%%%%%%%%%%%%%
%%%%%%%%%%%%%%%%%%%%%%%%%%%%%%%%%%%%%%%%%%%%%%%%%%
\begin{document}
%%%%%%%%%%%%%%%%%%%%%%%%%%%%%%%%%%%%%%%
%\setcounter{page}{1}
\pagestyle{plain}
\begin{center}
%{\sc A Personal Memo on}\ms\\
{\Large {\bf The Efficiency of Quantum Identity Testing of Multiple States}}
\bs\\

%{\large {\sf (Extended Abstract)}} \bs\\
%\ignore{
\begin{center}
{\sc Masaru Kada}$^1$ 
\hspace{5mm} 
{\sc Harumichi Nishimura}$^1$
\footnote{Supported in part by Scientific Research Grant, 
Ministry of Japan, 19700011.} 
\hspace{5mm} 
{\sc Tomoyuki Yamakami}$^2$
\footnote{Supported in part by grants from the Mazda 
Foundation and from the Grant-in-Aid for Scientific Research of Japan.}
%}
\end{center}

$^1${School of Science, Osaka Prefecture University}

{\tt $\{$kada,hnishimura$\}$@mi.s.osakafu-u.ac.jp}

$^2${School of Computer Science and Engineering, University of Aizu}

{\tt yamakami@u-aizu.ac.jp}
\end{center}

%%%%%%%%%%%%%%%%%%%%%%%%%%%%%%%%%%%%%%%%%%%%
\bs
\begin{abstract}
\ms
\n We examine two quantum operations, the Permutation Test and the Circle Test, which test the identity of $n$ quantum states. These operations naturally extend the well-studied Swap Test on two quantum states. We first show the optimality of the Permutation Test for any input size $n$ as well as the optimality of the Circle Test for three input states. In particular, when $n=3$, we present a semi-classical protocol, incorporated with the Swap Test, which approximates the Circle Test efficiently. Furthermore, we show that, with help of classical preprocessing, a single use of the Circle Test can approximate the Permutation Test efficiently for an arbitrary input size $n$. 
    
%\end{abstract}
%\vspace{.1in}

%\n{\em Keywords:} quantum state, identity testing, swap test                                                                    
\end{abstract}

%%%%%%%%%%%%%%%%%%%%%%%%%%%%%%%%%%%%
%\newpage
%\pagestyle{plain}
%\setcounter{page}{1}
\section{\label{sec1}Introduction}

When we manipulate quantum information,  one of the 
fundamental operations is to compare two or more pieces of quantum information. 
In particular, we wish to test whether two quantum states are identical or nearly orthogonal to each other. A standard quantum operation to test the identity of two quantum states is the {\em (Controlled) Swap Test}, which ``conditionally'' swaps the two quantum states and obtains an answer by  measuring its controlled qubit. The Swap Test finds a direct application to, for instance, the {\em fingerprinting} protocol of Buhrman, Cleve, Watrous, and de Wolf\ \cite{BCWW01}. They considered the following three-party communication game (known as 
{\em simultaneous message passing model} in communication complexity \cite{KN97}). Two parties, Alice and Bob, hold $m$-bit inputs $x$ and $y$, respectively, and Referee wishes to calculate a desired value $f(x,y)$ correctly with high probability,
based solely on the messages received from Alice and Bob, who are prohibited to communicate with each other.

For instance, the equality function $EQ$ ({\it i.e.}, $EQ(x,y)=1$ if $x=y$ and $0$ otherwise)
requires, by a quantum operation of Buhrman {\it et al.}\ \cite{BCWW01}, 
Alice and Bob to send quantum information of $O(\log m)$ qubits to Referee, 
who applies the Swap Test over the received quantum states to test whether $x=y$ (and thus computes $EQ(x,y)$). 
In stark comparison, Alice and Bob should send $\Omega(\sqrt{m})$ bits of classical information to Referee \cite{NS96,BK97} (this bound turns out to be tight \cite{Amb96}) to compute the equality function. The usefulness of the Swap Test in the above protocol of Buhrman {\it et al.}\ stems from the fact that two quantum states received from Alice and Bob are either identical (when $x=y$) or nearly orthogonal (when $x\neq y$). 

Besides \cite{BCWW01}, the Swap Test has been a key player in various fingerprinting protocols in, {\it e.g.}, \cite{AS04,Bea04,GKW06,SWS07,Yao03}.
Moreover, the Swap Test has been used in various physical and computational settings, which include stabilization of quantum computation \cite{Bar97}, quantum estimation \cite{Eke02}, 
quantum Merlin-Arthur games \cite{KMY03}, and black-box group problems \cite{Fri03}. 
Nevertheless, the Swap Test handles only two quantum states. How can we test the identity of more than two quantum states?

This paper examines two natural generalizations of the Swap Test, referred to as the {\em Permutation Test} 
and the {\em Circle Test}, which turn out to be useful tools in testing the identity of three or more quantum states. 
Instead of swapping two states in the Swap Test, the Permutation Test ``conditionally'' permutes $n$ input states by applying, in superposition, all possible permutations over $n$ elements. The Circle Test is a simpler form of the Permutation Test using only multiple applications of a single permutation. (For their formal definitions, see Section \ref{sec2}.)
In a slightly different context, the Permutation Test 
can be used to amplify 
the success probability of the aforementioned quantum protocol 
for $EQ$ \cite{BCWW01}. 
In this paper, our focal point 
is the following problem of testing the identity of $n$ quantum states in a state space ${\cal H}$, provided that these states are either identical or mutually orthogonal, for simplicity of our argument.

\medskip

{\bf Quantum State Identity Problem ($\mathrm{QSI}_n$)} 

{\bf Input:} $n$ quantum states $(|\psi_1\rangle,|\psi_2\rangle,\ldots,|\psi_n\rangle)$ 
in a state space ${\cal H}$. 

{\bf Promise:} Any pair of the $n$ quantum states is equal or orthogonal. 

{\bf Output:} YES if all $n$ states are identical; NO otherwise. 

\medskip

Over two input states, the Swap Test can solve the above identity problem $\mathrm{QSI}_2$ by outputting ``EQUAL'' on any ``YES'' instance with certainty ({\em completeness error probability} $0$) 
and outputting ``NOT EQUAL'' on any ``NO'' instance with probability exactly $1/2$ 
({\em soundness error probability} $1/2$). Under the so-called {\em one-sided error requirement}, in which the completeness error probability should be $0$, the Swap Test is known to be 
an optimal quantum operation for the identity problem $\mathrm{QSI}_2$. 
This fact was implicitly proven in 2001 by Kobayashi, Matsumoto, and Yamakami \cite{KMY01} (see also \cite{Bea04}). 
In Section 2, we show the optimality of the Circle Test as well as the Permutation Test under the same one-sided error requirement; more precisely, the Circle Test is an optimal operation for the problem $\mathrm{QSI}_3$, 
and the Permutation Test is optimal for $\mathrm{QSI}_n$ for an arbitrary input size $n\geq2$. 

Subsequently, we present efficient approximations of the Circle Test and the Permutation Test using ``semi-classical'' protocols involving the Swap Test and the Circle Test, respectively. As a direct consequence, 
these approximations help us build a concise quantum circuit that solves the problem $\mathrm{QSI}_n$ efficiently,  
because a quantum circuit that implements the Swap Test (resp. the Circle Test) is significantly more concise than any quantum circuit for the Circle Test (resp. the Permutation Test). 
In Section 3, we show how a certain sequential application of the Swap Test efficiently approximates the Circle Test for $\mathrm{QSI}_3$. Such an operation gives an optimal approximation procedure. 
In Section 4, we show that, with help of classical preprocessing, a single application of the Circle Test 
can approximate the Permutation Test for $\mathrm{QSI}_n$ with efficiency, which is one-sided error 
and has optimal soundness error probability up to a multiplicative factor of smaller than $2$. 
We conclude in Section \ref{sec5} with an extension of our results and also  a suggestion of future directions. 

\section{\label{sec2}The Permutation Test and the Circle Test}

Besides the well-studied Swap Test, we introduce two useful tests, called the {\em Permutation Test} and the {\em Circle Test}, which intend to solve our quantum state identity problem  $\mathrm{QSI}_n$ on $n$ input states taken 
from a state space ${\cal H}$. 
We begin with the formal definition of the Permutation Test on $n$ quantum states 
$(|\psi_1\rangle,|\psi_2\rangle,\ldots,|\psi_n\rangle)\in{\cal H}^{\otimes n}$.
For our notational convenience, let $\sigma = \{\sigma_0,\sigma_1,\ldots,\sigma_{n!-1}\}$ denote the set of all $n!$ permutations 
over the integer set $[n]:=\{1,2,\ldots,n\}$; namely, for each $i\in[n]$, $\sigma_i$ denotes the $i$-th element 
of the symmetric group $S_n$ (in a certain fixed order). 
Notice that the Swap Test is in fact the Permutation 
Test on two quantum states.

\medskip
{{\bf  Permutation Test}

Input: $n$ quantum states $(|\psi_1\rangle,|\psi_2\rangle,\ldots,|\psi_n\rangle)$ in a state space ${\cal H}$.  
\begin{enumerate}
\item Start with the quantum state $|0\rangle\otimes |\psi_1\rangle\otimes\cdots\otimes|\psi_n\rangle$,  
where $|0\rangle$ (often called the first register) denotes the ground state in the $n!$-dimensional 
state space. 

\item Apply the quantum Fourier transform $F_{n!}$ over $n!$ elements to the first register. 

\item Apply a controlled-$\sigma$ operation; that is, if the first register contains index $i\in\{0,1,\ldots,n!-1\}$, 
transform $|\psi_1\rangle\otimes\cdots\otimes|\psi_n\rangle$ 
to $|\psi_{\sigma_i(1)}\rangle\otimes\cdots\otimes |\psi_{\sigma_i(n)}\rangle$. 

\item Apply the inverse quantum Fourier transform $(F_{n!})^{-1}$ to the first register.

\item Measure the first register in the computational basis. If $0$ is observed, output EQUAL; 
otherwise, output NOT EQUAL. 
\end{enumerate}

The Circle Test is a simple form of the Permutation Test, defined by multiple applications of a single permutation, 
denoted $\sigma_c$, where $\sigma_c$ is the permutation on $[n]$ of the following form: $\sigma_c(n)=1$ and $\sigma_c(i)=i+1$ for any index $i\in[n-1]$. The notation $\sigma^{j}_c(i)$ means the result of the $j$ applications of $\sigma_c$ to $i$.

Our motivation of introducing the Circle Test is to provide a tool in building a ``concise'' quantum circuit that solves $\mathrm{QSI}_n$ efficiently. Consider a quantum circuit that implements the Permutation Test for the problem 
$\mathrm{QSI}_n$. Since the Permutation Test involves the quantum Fourier transform $F_{n!}$ over $n!$ elements, a straightforward decomposition of such a transform gives a large-size quantum circuit for $\mathrm{QSI}_n$. It is therefore better to use a simpler quantum test (than the Permutation Test) to solve the problem $\mathrm{QSI}_n$ with efficiency.

\medskip
{\bf Circle Test} 

Input: $n$ quantum states $(|\psi_1\rangle,|\psi_2\rangle,\ldots,|\psi_n\rangle)$ in a state space ${\cal H}$.  
\begin{enumerate}
\item Start with the quantum state $|0\rangle\otimes |\psi_1\rangle\otimes\cdots\otimes|\psi_n\rangle$ 
where $|0\rangle$ (often called the first register) denotes the ground state in the $n$-dimensional 
state space.  

\item Apply the quantum Fourier transform $F_{n}$ to the first register. 

\item Apply a controlled-$\sigma_c$ operation; namely, when the first register contains $i\in\{0,1,\ldots,n-1\}$, 
transform $|\psi_1\rangle\otimes\cdots\otimes|\psi_n\rangle$ to $|\psi_{\sigma_c^i(1)}\rangle\otimes\cdots\otimes|\psi_{\sigma_c^i(n)}\rangle$. 

\item Apply the inverse quantum Fourier transform $(F_n)^{-1}$ 
to the first register.

\item Measure the first register in the computational basis. If $0$ is observed, output EQUAL; 
otherwise, output NOT EQUAL. 
\end{enumerate}

In particular, when $n=2$, the Permutation Test as well as the Circle Test coincide with the Swap Test. 
For later analysis, we show how to calculate the probabilities that our new tests on $n$ input states 
output EQUAL. 

\begin{lemma}\label{circle_test}
Given $n$ input states $(|\psi_1\rangle,\ldots,|\psi_n\rangle)\in {\cal H}^{\otimes n}$, 
the probabilities that the Permutation Test and the Circle Test output EQUAL are, respectively,  
\begin{equation}\label{perm_formula}
\frac{1}{n!}\sum_{k=0}^{n!-1}\prod_{m=1}^{n} \langle\psi_m|\psi_{\sigma_k(m)}\rangle
\hspace{1cm}
and
\hspace{1cm}
\frac{1}{n}\sum_{k=0}^{n-1}\prod_{m=1}^{n} \langle\psi_m|\psi_{\sigma_c^k(m)}\rangle.
\end{equation}
\end{lemma}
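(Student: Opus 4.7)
The plan is a direct amplitude calculation that tracks the Permutation Test state-by-state and then exploits the group structure of $S_n$ to collapse a double sum over permutations into a single sum. The Circle Test case is analogous, with the cyclic subgroup $\langle \sigma_c\rangle$ in place of $S_n$.

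First I would carry out the Permutation Test. After Step~2 the state is $\frac{1}{\sqrt{n!}}\sum_{i=0}^{n!-1}|i\rangle\otimes|\psi_1\rangle\otimes\cdots\otimes|\psi_n\rangle$, and Step~3 turns it into $\frac{1}{\sqrt{n!}}\sum_{i=0}^{n!-1}|i\rangle\otimes|\psi_{\sigma_i(1)}\rangle\otimes\cdots\otimes|\psi_{\sigma_i(n)}\rangle$. Applying $(F_{n!})^{-1}$ in Step~4 (with $\omega=e^{2\pi\sqrt{-1}/n!}$) and then picking out the $|0\rangle$-component of the first register (since by definition $(F_{n!})^{-1}|i\rangle$ has amplitude $1/\sqrt{n!}$ on $|0\rangle$) gives the unnormalized ``EQUAL'' branch
\[
\frac{1}{n!}\sum_{i=0}^{n!-1}|\psi_{\sigma_i(1)}\rangle\otimes\cdots\otimes|\psi_{\sigma_i(n)}\rangle.
\]
The probability of outputting EQUAL is therefore the squared norm of this vector, which equals
\[
\frac{1}{(n!)^2}\sum_{i,k=0}^{n!-1}\prod_{m=1}^{n}\langle\psi_{\sigma_i(m)}|\psi_{\sigma_k(m)}\rangle.
\]

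The key step — and the only one that requires a small idea rather than bookkeeping — is to simplify this double sum. For each fixed $i$, I would substitute $m'=\sigma_i(m)$ (a bijection on $[n]$) so that the inner product becomes $\prod_{m'=1}^n\langle\psi_{m'}|\psi_{\sigma_k\sigma_i^{-1}(m')}\rangle$. Since $k\mapsto \sigma_k\sigma_i^{-1}$ is a bijection of $S_n$ onto itself, the sum over $k$ produces $\sum_{\tau\in S_n}\prod_m \langle\psi_m|\psi_{\tau(m)}\rangle$, independent of $i$. The outer sum over $i$ then contributes the factor $n!$, yielding exactly the left-hand formula of (\ref{perm_formula}).

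For the Circle Test, I would reuse the same template verbatim, replacing $F_{n!}$ by $F_n$, the permutation set $\{\sigma_k\}_{k=0}^{n!-1}$ by the cyclic set $\{\sigma_c^k\}_{k=0}^{n-1}$, and $S_n$ by the cyclic group $\langle\sigma_c\rangle$ of order $n$; the same substitution $m'=\sigma_c^i(m)$ works because $\langle\sigma_c\rangle$ is still closed under composition and inversion. I do not anticipate any real obstacle: the only subtle point is verifying that the index shift is a bijection on the acting group (trivial in both cases), and that no interference terms from nonzero outcomes $|j\rangle$ contribute to the EQUAL probability, which follows because we are only projecting onto $|0\rangle$ in the first register.
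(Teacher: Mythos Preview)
Your proposal is correct and follows essentially the same route as the paper: compute the probability as the squared norm of the ``EQUAL'' branch $\frac{1}{|G|}\sum_{g\in G}\bigotimes_m|\psi_{g(m)}\rangle$, expand as a double sum, and collapse it to a single sum via the reindexing $m'=\sigma_i(m)$ together with the group bijection $\sigma_k\mapsto\sigma_k\sigma_i^{-1}$. The only cosmetic difference is that the paper writes out the Circle Test case and declares the Permutation Test analogous, whereas you do the reverse.
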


\begin{proof}
We show the lemma only for the Circle Test, because the case of the Permutation Test can be similarly proven. Let $(|\psi_1\rangle,\ldots,|\psi_n\rangle)\in {\cal H}^{\otimes n}$ be our $n$ input states. 
The Circle Test outputs EQUAL on these input states with probability 
exactly $\|\sum_{i=0}^{n-1}|\psi_{\sigma_c^i(1)}\rangle\cdots|\psi_{\sigma_c^i(n)}\rangle\|^2/n^2$, which can be further simplified as
\begin{eqnarray*}
\hs{-15}\lefteqn{\frac{1}{n^2}\sum_{i=0}^{n-1}\sum_{j=0}^{n-1}
\langle\psi_{\sigma_c^i(1)}|\psi_{\sigma_c^j(1)}\rangle\cdots\langle\psi_{\sigma_c^i(n)}|\psi_{\sigma_c^j(n)}\rangle}\hs{5} \\
\hs{-10}&=& \frac{1}{n^2}\sum_{k=0}^{n-1}\sum_{i=0}^{n-1}\prod_{m=1}^n\langle\psi_{\sigma_c^i(m)}|\psi_{\sigma_c^{i+k}(m)}\rangle 
\;\;=\;\; \frac{1}{n^2}\sum_{k=0}^{n-1}\sum_{i=0}^{n-1}\prod_{m=\sigma_c^{-i}(1)}^{\sigma_c^{-i}(n)}
\langle\psi_{\sigma_c^i(m)}|\psi_{\sigma_c^{i+k}(m)}\rangle.
\end{eqnarray*}
Clearly, the last expression equals $\frac{1}{n}\sum_{k=0}^{n-1}\prod_{m=1}^n\langle\psi_m|\psi_{\sigma_c^k(m)}\rangle$, as requested.
\end{proof}

{}From Lemma \ref{circle_test}, we can obtain the following result for the Permutation Test. 

\begin{proposition}\label{perm}
Let $n$ be any number at least $2$. The Permutation Test solves the problem $\mathrm{QSI}_n$ with completeness error probability $0$ and soundness error probability at most $1/n$. 
\end{proposition}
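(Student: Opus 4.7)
The plan is to apply Lemma \ref{circle_test} directly. Writing $p$ for the probability that the Permutation Test outputs EQUAL on input $(|\psi_1\rangle,\ldots,|\psi_n\rangle)$, the lemma gives
$$p = \frac{1}{n!}\sum_{k=0}^{n!-1}\prod_{m=1}^{n} \langle\psi_m|\psi_{\sigma_k(m)}\rangle.$$
For completeness, on a YES-instance all states coincide, so every inner product equals $1$, every summand equals $1$, and $p=1$; hence the completeness error is $0$. Everything then reduces to bounding $p$ from above on a NO-instance.

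For soundness, I would exploit the promise: each $\langle\psi_i|\psi_j\rangle$ is $0$ or $1$, so ``$i\sim j$ iff $|\psi_i\rangle=|\psi_j\rangle$'' is an equivalence relation on $[n]$. Let $E_1,\ldots,E_\ell$ be its classes, with sizes $n_1,\ldots,n_\ell$ summing to $n$; the NO case is precisely $\ell\ge 2$. The product $\prod_{m=1}^{n}\langle\psi_m|\psi_{\sigma_k(m)}\rangle$ equals $1$ when $\sigma_k$ preserves each $E_t$ setwise and $0$ otherwise, so exactly $n_1!\,n_2!\cdots n_\ell!$ permutations contribute, giving
$$p = \frac{n_1!\,n_2!\,\cdots n_\ell!}{n!}.$$

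It remains to verify the combinatorial inequality $n_1!\cdots n_\ell!\le(n-1)!$ whenever $\ell\ge 2$. I would first collapse $E_2,\ldots,E_\ell$ into one block using the standard bound $a!\,b!\le(a+b)!$, reducing the problem to showing $n_1!(n-n_1)!\le(n-1)!$ for $1\le n_1\le n-1$; this is equivalent to $\binom{n}{n_1}\ge n$, which follows immediately from $\binom{n}{k}\ge\binom{n}{1}=n$ for $1\le k\le n-1$ by the unimodality and symmetry of binomial coefficients. Chaining the two estimates yields $p\le(n-1)!/n!=1/n$, the desired soundness bound.

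I do not foresee a real obstacle: the argument is essentially an exact counting step followed by a short combinatorial estimate. The only step requiring any thought is recognizing that the promise collapses the complicated-looking sum from Lemma \ref{circle_test} into a count of partition-preserving permutations, after which the bound $\binom{n}{k}\ge n$ finishes the job.
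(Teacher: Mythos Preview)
Your proof is correct and follows essentially the same approach as the paper's: both invoke Lemma~\ref{circle_test}, observe that under the promise each summand is $0$ or $1$ according to whether the permutation stabilizes the partition into equal-state classes, count the stabilizing permutations, and finish with the bound $l!(n-l)!/n!\le 1/n$. The only cosmetic difference is that the paper first argues one may reduce to the two-class case and then counts, whereas you count exactly for $\ell$ classes and then collapse to two via $a!\,b!\le(a+b)!$; the content is the same.
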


\begin{proof}
Consider a direct application of the Permutation Test. 
Obviously, the Permutation Test has completeness error probability $0$ due to expression (\ref{perm_formula}) of Lemma \ref{circle_test}. Let us fix an arbitrary  NO instance $(|\psi_1\rangle,\ldots,|\psi_n\rangle)$. 
We now argue that in the worst-case scenario, it suffices to consider the case where all indices of our NO instance are divided into two sets $I_1$ and $I_2$ satisfying the following ``equivalence'' conditions: (i) all states whose indices are in $I_1$ (resp. $I_2$) are identical and (ii) any state having an index in $I_1$ and any state having an index in $I_2$ are mutually orthogonal. 
To see that this is sufficient, consider the case where all the indices are divided into three (or more) sets, say,  
$I_1$, $I_2$ and $I_3$. A key observation is that the soundness error probability on the NO instance is at most the soundness error probability on the same instance whose indices are divided into two sets, $I_1$ and $I_2\cup I_3$. Therefore, we need to consider only two sets $I_1$ and $I_2$. 

Now, assume that we have the aforementioned two sets $I_1$ and $I_2$ with 
$|I_1|=l$ and $|I_2|=n-l$ for a certain number $l$ with $1\leq l\leq n-1$. 
For any permutation $\sigma_k$, the value $\prod_{m=1}^{n} \langle\psi_m|\psi_{\sigma_k(m)}\rangle$ becomes $1$ if and only if $\sigma_k$ {\em setwisely stabilizes} $I_1$ and $I_2$; namely, $\sigma_k$ maps any element with an index in $I_1$ (resp. $I_2$) to another element in $I_1$ (resp. $I_2$). 
This property concludes that the soundness error probability of the NO instance equals the ratio between the number of all such permutations and the total number of permutations in $S_n$. This ratio is clearly $l!(n-l)!/n!\leq 1/n$. 
\end{proof}

Under the one-sided error requirement, we can show the optimality of the Permutation Test for $\mathrm{QSI}_n$; namely, any one-sided error quantum operation for $\mathrm{QSI}_n$ must have the soundness error probability of at least $1/n$. Earlier, Kobayashi, Matsumoto, and Yamakami \cite{KMY01} (see also \cite{Bea04}) implicitly proved the optimality of the Permutation Test for $\mathrm{QSI}_2$ (equivalently, the Swap Test). 

\begin{proposition}\label{perm_opt}
Let $n$ be any number greater than $1$. Any quantum operation to solve $\mathrm{QSI}_n$ 
under the one-sided error requirement has soundness probability at least $1/n$.
\end{proposition}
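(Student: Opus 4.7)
My plan is to model any one-sided error quantum operation solving $\mathrm{QSI}_n$ as a POVM effect $M$ on the input Hilbert space ${\cal H}^{\otimes n}$, so that the acceptance probability on input $|\Phi\rangle$ is $\langle\Phi|M|\Phi\rangle$. Any ancilla workspace can be folded into $M$ by setting $M := \langle 0|_{anc}U^{\dagger}\Pi_{acc}U|0\rangle_{anc}$, which satisfies $0\leq M\leq I$ automatically. Completeness says $\langle\psi|^{\otimes n}M|\psi\rangle^{\otimes n}=1$ for every unit $|\psi\rangle\in{\cal H}$, and combined with $M\leq I$ this forces each product state $|\psi\rangle^{\otimes n}$ to lie in the $+1$-eigenspace of $M$. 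Since these product states linearly span the symmetric subspace $\mathrm{Sym}^n({\cal H})$ of ${\cal H}^{\otimes n}$ (a standard polarization fact), I conclude that $\langle\Phi|M|\Phi\rangle=1$ for every symmetric state $|\Phi\rangle$.

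The key second step is to construct a symmetric state that decomposes as an equal superposition of orthogonal NO instances. Fix orthonormal $|e_0\rangle,|e_1\rangle\in{\cal H}$ and, for $k=1,\ldots,n$, let $|\Psi_k\rangle$ be the $n$-fold tensor product with $|e_1\rangle$ in slot $k$ and $|e_0\rangle$ in every other slot. Each $|\Psi_k\rangle$ is a legal NO instance of $\mathrm{QSI}_n$, and the $n$ vectors are pairwise orthogonal. Their equal superposition
\[
|\Phi_{sym}\rangle \;:=\; \frac{1}{\sqrt{n}}\sum_{k=1}^{n}|\Psi_k\rangle
\]
is, up to a positive scalar, the total symmetrization of $|\Psi_1\rangle$, hence a unit vector in $\mathrm{Sym}^n({\cal H})$.

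Applying the first step to $|\Phi_{sym}\rangle$ yields $\sum_{k,l=1}^{n}\langle\Psi_k|M|\Psi_l\rangle = n$. Since $M\geq 0$, the Gram-type inequality $|\langle\Psi_k|M|\Psi_l\rangle|\leq\sqrt{\langle\Psi_k|M|\Psi_k\rangle\langle\Psi_l|M|\Psi_l\rangle}$ holds for every pair, so $n\leq n^2\max_{k}\langle\Psi_k|M|\Psi_k\rangle$. Thus some $|\Psi_k\rangle$ must be accepted with probability at least $1/n$, which is the required lower bound on the soundness error.

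I expect the one point demanding care is the transfer from ``completeness on all product states'' to ``completeness on the whole symmetric subspace''. This rests on the polarization identity showing that $\{|\psi\rangle^{\otimes n}:|\psi\rangle\in{\cal H}\}$ linearly spans $\mathrm{Sym}^n({\cal H})$, together with the trivial observation that the $+1$-eigenspace of any $M$ with $0\leq M\leq I$ is a linear subspace. Once those are in hand, the final Cauchy--Schwarz manipulation is short; the NO instances $|\Psi_k\rangle$ chosen above are precisely the worst-case configurations in the proof of Proposition \ref{perm} (one slot different from the other $n-1$), so the construction matches the Permutation Test's upper bound tightly.
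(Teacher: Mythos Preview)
Your proof is correct and follows essentially the same route as the paper's: model the test as a POVM effect, use the fact that $\{|\psi\rangle^{\otimes n}\}$ spans the symmetric subspace together with $0\leq M\leq I$ to force $M$ to act as the identity there (equivalently $M\geq P_S$), and then evaluate on the NO instance with one orthogonal slot. The only cosmetic difference is the finishing step: the paper applies the operator inequality $E_y\geq P_S$ directly to a single NO instance to get $\langle\Psi_1|E_y|\Psi_1\rangle\geq\langle\Psi_1|P_S|\Psi_1\rangle=1/n$, whereas you expand the symmetric superposition $|\Phi_{sym}\rangle=\frac{1}{\sqrt{n}}\sum_k|\Psi_k\rangle$ and use Cauchy--Schwarz to extract some $k$ with $\langle\Psi_k|M|\Psi_k\rangle\geq 1/n$; both arguments are equivalent and yield the same bound.
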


\begin{proof}
Our proof generalizes the new optimality proof for the Swap Test of Hotta and Ozawa \cite{HO07}, whose fundamental idea is similar to \cite{KMY01,Bea04}. Let ${\cal H}$ be our state space. Let $\{E_y,E_n\}$ denote any optimal binary positive operator-valued measure (POVM) that meets the one-sided error requirement, from which we have  
$
E_y(|\psi\rangle^{\otimes n})=|\psi\rangle^{\otimes n}
$
for any state $|\psi\rangle\in{\cal H}$. Let $P_S$ be the projection onto the {\em symmetric subspace} \cite{Bar97}
\[
\hs{-21}
\{ |S_{\mu}\rangle \}=\left\{\left. \sum_{\sigma\in S_n}|m_{\sigma(1)}\rangle\cdots|m_{\sigma(n)}\rangle \right| 
\begin{array}{l}
m_1,m_2,\ldots,m_n\mbox{ are the indices of elements  }\\
\mbox{in the computational basis of ${\cal H}$} 
%m_1\leq m_2\leq \cdots \leq m_n
\end{array}\right\},
\]
which is the subspace of ${\cal H}^{\otimes n}$ that is symmetric under the interchange of states for any pair of positions in the tensor product. Here, we claim that $P_S$ satisfies the equation 
$ E_yP_S=P_S$. 
This claim is shown as follows. Notice that the symmetric subspace is also the subspace of ${\cal H}^{\otimes n}$ spanned by all states of the form $|\psi\rangle^{\otimes n}$ \cite{Bar97}. Using this fact, for any state $|\phi\rangle\in{\cal H}^{\otimes n}$, $P_S|\phi\rangle$ can be expressed as $P_S|\phi\rangle=\sum_\alpha c_\alpha|\varphi_\alpha\rangle^{\otimes n}$. 
The equality $E_y(|\psi\rangle^{\otimes n})=|\psi\rangle^{\otimes n}$ implies that 
$P_S|\phi\rangle$ can be further written as 
\[
\sum_\alpha c_\alpha|\varphi_\alpha\rangle^{\otimes n} 
= \sum_\alpha c_\alpha E_y\left(|\varphi_\alpha\rangle^{\otimes n}\right) 
= E_y\left(\sum_\alpha c_\alpha|\varphi_\alpha\rangle^{\otimes n} \right)\\
= E_yP_S|\phi\rangle.
\]
It follows from the equality $E_yP_S=P_S$ that $E_y=P_S+\sum_\nu\lambda_\nu|A_\nu\rangle\langle A_\nu|$, 
where $\lambda_\nu$ is nonnegative (because $E_y$ is positive) and 
$|A_\nu\rangle$ lies in the orthogonal complement of 
the symmetric subspace. 
Therefore, we conclude that $E_y\geq P_S$. 
Note that the soundness error probability $p_e$ equals
\[
p_e=\mathrm{Tr}[E_y(|\psi_1\rangle\cdots|\psi_n\rangle\langle\psi_1|\cdots\langle\psi_n|)]
\]
for a certain NO instance $(|\psi_1\rangle,\ldots,|\psi_n\rangle)$. We want to show that $p_e\geq 1/n$. 
Now, let us consider a specific NO instance $(|\psi_1\rangle,\ldots,|\psi_n\rangle)$ satisfying that $|\psi_2\rangle=\cdots=|\psi_n\rangle$ as the worst-case instance. {}From the inequality $E_y\geq P_S$, $p_e$ is lower-bounded by
\begin{eqnarray*} 
p_e\geq \mathrm{Tr}[P_S(|\psi_1\rangle\cdots|\psi_n\rangle\langle\psi_1|\cdots\langle\psi_n|)]
=\frac{1}{n!}\sum_{\sigma\in S_n}\prod_{i=1}^n|\langle\psi_i|\psi_{\sigma(i)}\rangle|^2 = \frac{1}{n}.
\end{eqnarray*}
This completes the proof.
\end{proof}

Propositions \ref{perm} and \ref{perm_opt} show the optimality of the Permutation Test for an arbitrary input size $n$.
As for the Circle Test, when $n=3$, we can show in the following proposition that the Circle Test is also optimal under the one-sided error requirement. 

\begin{proposition}\label{circ_3}
The problem $\mathrm{QSI}_3$ is solved with one-sided error probability by the Circle Test with soundness error probability exactly $1/3$. 
\end{proposition}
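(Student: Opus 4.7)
The plan is to combine the formula from Lemma~\ref{circle_test} (applied to $n=3$) with a short case analysis, and then invoke Proposition~\ref{perm_opt} to get optimality for free.

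First I would handle completeness. By Lemma~\ref{circle_test}, the Circle Test outputs EQUAL on input $(|\psi_1\rangle,|\psi_2\rangle,|\psi_3\rangle)$ with probability
\[
\frac{1}{3}\sum_{k=0}^{2}\prod_{m=1}^{3}\langle\psi_m|\psi_{\sigma_c^k(m)}\rangle.
\]
When $|\psi_1\rangle=|\psi_2\rangle=|\psi_3\rangle$, each of the three products equals $1$, so the probability is $1$, which gives completeness error $0$.

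Next I would compute the soundness error. Write out the two nontrivial products explicitly: for $k=1$ the factor is $\langle\psi_1|\psi_2\rangle\langle\psi_2|\psi_3\rangle\langle\psi_3|\psi_1\rangle$, and for $k=2$ it is $\langle\psi_1|\psi_3\rangle\langle\psi_2|\psi_1\rangle\langle\psi_3|\psi_2\rangle$. The promise forces every pair $(|\psi_i\rangle,|\psi_j\rangle)$ with $i\neq j$ to have inner product either $0$ or $1$, so each cyclic product is $0$ or $1$, and it equals $1$ iff all three states are identical. Since we are on a NO instance, in each of the possible NO configurations (all three mutually orthogonal, or exactly two equal with the third orthogonal to them) at least one factor in every cyclic product is $0$. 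Hence both the $k=1$ and $k=2$ terms vanish, leaving probability exactly $\frac{1}{3}(1+0+0)=\frac{1}{3}$ on every NO instance.

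Finally, optimality is immediate from Proposition~\ref{perm_opt} specialized to $n=3$: any one-sided error quantum operation for $\mathrm{QSI}_3$ already has soundness error at least $1/3$, so the Circle Test achieves this bound exactly. There is no real obstacle here; the only thing to be careful about is enumerating the NO cases consistent with the pairwise equal-or-orthogonal promise and checking that each cyclic product of three inner products does contain a zero factor, which is the reason the argument collapses so neatly for $n=3$.
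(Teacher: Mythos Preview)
Your argument is correct. The direct computation via Lemma~\ref{circle_test} for $n=3$ is clean: on any NO instance some pair has inner product $0$, and since each of the two nontrivial cyclic products contains all three off-diagonal inner products as factors, both products vanish, leaving probability exactly $1/3$. The appeal to Proposition~\ref{perm_opt} for optimality is fine (and even slightly more than the proposition asks for).

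The paper, however, takes a different and more general route. Rather than computing the $n=3$ case directly, it introduces the \emph{Alternation Test} (the Permutation Test with $S_n$ replaced by the alternating group $A_n$) and proves in Lemma~\ref{alternation-test} that this test has soundness error at most $1/n$ for every $n$, via a group-theoretic count of even permutations that setwise stabilize a two-block partition $I_1\cup I_2$ of $[n]$. Proposition~\ref{circ_3} then drops out from the coincidence $A_3=C_3$, which identifies the Alternation Test with the Circle Test when $n=3$. Your approach is more elementary and gives ``exactly $1/3$'' in one stroke, but it is specific to $n=3$; the paper's detour through Lemma~\ref{alternation-test} buys a statement valid for all $n$ and explains structurally \emph{why} the Circle Test is optimal at $n=3$ (namely, because the cyclic and alternating groups happen to agree there).
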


This proposition follows from a more general statement. For technical reasons, 
we define the {\em Alternation Test} by replacing $S_n$ in the definition of the Permutation Test 
with the alternating group $A_n$, which is the group generated by the even permutations in $S_n$. 

\begin{lemma}\label{alternation-test}
For any number $n\geq2$, the Alternation Test solves the problem $\mathrm{QSI}_n$ with completeness error probability $0$ and soundness error probability at most $1/n$. 
\end{lemma}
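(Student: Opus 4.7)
The plan is to mirror the proofs of Lemma \ref{circle_test} and Proposition \ref{perm}, replacing the symmetric group $S_n$ by the alternating group $A_n$ everywhere. The first thing I would do is run the same QFT-based computation used in the proof of Lemma \ref{circle_test}, but with a first register of dimension $|A_n|=n!/2$ and the controlled-$\sigma$ step restricted to the even permutations; this yields the probability formula
$$
\Pr[\text{EQUAL}] \;=\; \frac{2}{n!}\sum_{\sigma\in A_n}\prod_{m=1}^n\langle\psi_m|\psi_{\sigma(m)}\rangle.
$$
Completeness drops out for free: on a YES instance every term in the sum is $1$ and the total is $|A_n|$, giving probability $1$.

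For soundness I would re-invoke the worst-case reduction used in the proof of Proposition \ref{perm}. By the same argument as there, it suffices to handle a NO instance whose $n$ indices partition into exactly two ``equivalence classes'' $I_1, I_2$ (states inside the same class coincide, states across the two classes are mutually orthogonal) of sizes $l$ and $n-l$ with $1\le l\le n-1$, since any finer partition can only enlarge the set of stabilizing permutations and hence increase the soundness error. For such an instance, $\prod_m\langle\psi_m|\psi_{\sigma(m)}\rangle$ equals $1$ if $\sigma$ setwisely stabilizes $\{I_1,I_2\}$ and vanishes otherwise, so the soundness error equals exactly
$$
\frac{|A_n\cap (S_{I_1}\times S_{I_2})|}{|A_n|}.
$$

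The key combinatorial step is to bound this ratio. Whenever at least one of the classes has size $\ge 2$ (which, for $n\ge 3$, is automatic), the subgroup $S_{I_1}\times S_{I_2}\le S_n$ contains a transposition inside the larger class, and that transposition is odd in $S_n$; so $A_n$ slices $S_{I_1}\times S_{I_2}$ exactly in half, giving $|A_n\cap (S_{I_1}\times S_{I_2})|=l!(n-l)!/2$. Plugging in, the ratio collapses to
$$
\frac{l!(n-l)!/2}{n!/2}\;=\;\binom{n}{l}^{-1}\;\le\;\frac{1}{n}
$$
for all $1\le l\le n-1$, as required.

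The main obstacle is really just justifying the worst-case reduction, i.e.\ verifying that a NO instance with three or more orthogonal classes cannot beat the two-class scenario — this is the same merging observation used in Proposition \ref{perm}, now applied inside $A_n$. Everything else is a direct transcription of the $S_n$ argument, with the only genuinely new ingredient being the parity observation that lets us halve $|S_{I_1}\times S_{I_2}|$ when intersecting with $A_n$.
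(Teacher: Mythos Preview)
Your proof is correct and follows the same outline as the paper's: derive the EQUAL-probability formula over $A_n$, note completeness is immediate, reduce soundness to the two-class case, and count the even permutations stabilizing $\{I_1,I_2\}$. Your counting step is in fact slicker than the paper's---you observe that $S_{I_1}\times S_{I_2}$ contains an odd element (a transposition in the larger block) and invoke the sign homomorphism to halve its order in one stroke, whereas the paper does a case split on $l\in\{1,n-1\}$ versus $2\le l\le n-2$, handling the latter by separately counting even$\times$even and odd$\times$odd products.

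One wording slip: you write that ``any finer partition can only enlarge the set of stabilizing permutations.'' The direction is reversed---finer partitions \emph{shrink} the stabilizer, which is precisely why \emph{coarsening} to two classes yields the worst (largest) soundness error; your conclusion is right but the justification as written points the wrong way. (Also, both your argument and the paper's implicitly need $n\ge 3$: when $n=2$ neither block contains a transposition, $A_2$ is trivial, and the soundness error is actually $1$.)
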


Proposition \ref{circ_3} follows immediately from this lemma since $A_3$ equals the cyclic group $C_3$, which defines 
the Circle Test over three states. 

\begin{proofof}{Lemma \ref{alternation-test}}
The Alternation Test has completeness error probability $0$ since, similar to Proposition \ref{perm}, 
the probability $p$ that the Alternation Test outputs EQUAL on $n$ input states $(|\psi_1\rangle,\ldots,|\psi_n\rangle)$ is
$
p = \frac{2}{n!}\sum_{k=0}^{n!/2-1}\prod_{m=1}^{n} \langle\psi_m|\psi_{\tau_k(m)}\rangle,
$
where $\tau_1,\ldots,\tau_{n!/2}$ denote all the even permutations over $[n]$ (in a certain fixed order). 
Hereafter, let us fix a NO instance $(|\psi_1\rangle,\ldots,|\psi_n\rangle)$. 
Similar to the proof of Proposition \ref{perm}, it suffices to deal with the case 
where all indices of this NO instance are divided into two sets $I_1$ and $I_2$ satisfying: (i) all states having indices in $I_1$ (resp. $I_2$) are identical and (ii) any state with an index in $I_1$ and any state with an index in $I_2$ are mutually orthogonal. Assume that $I_1$ and $I_2$ satisfy $|I_1|=l$ and $|I_2|=n-l$ for a certain number $l$ 
with $1\leq l\leq n-1$. 

Note that, for any even permutation $\tau_k$, the value $\prod_{m=1}^{n} \langle\psi_m|\psi_{\tau_k(m)}\rangle$ 
equals $1$ if and only if $\tau_k$ setwisely stabilizes $I_1$ and $I_2$. Thus, the soundness error probability of the NO instance equals the ratio between the number $L$ of all even permutations that setwisely stabilize $I_1$ and $I_2$, and the total number $|A_n|$. 
We will show that this ratio $L/|A_n|$ is exactly $l!(n-l)!/n!$, and hence $L/|A_n|=l!(n-l)!/n! \leq 1/n$. 
Since $|A_n| = n!/2$, it is enough to prove that $L=l!(n-l)!/2$. 
To evaluate $L$, we consider the following two cases: (i) $l=1$ or $l=n-1$ and (ii) $2\leq l\leq n-2$. 

We consider Case (i) when $l=1$. In this case, any even permutation that setwisely stabilizes $I_1$ and $I_2$ 
must fix a unique element in $I_1$, and thus 
it is also an even permutation on $I_2$. 
This implies that $L = |A_{n-1}|$, which is $(n-1)!/2$, as desired. 
In Case (ii), any even permutation that setwisely stabilizes $I_1$ and $I_2$ is either (a) the product of an even permutation over $I_1$ and an even permutation over $I_2$ or 
(b) the product of an odd permutation over $I_1$ and an odd permutation over $I_2$. 
First, we consider the Case (a). Let us consider the total number of products of even permutations over $I_1$ 
and even permutations over $I_2$. This number clearly equals $(l!/2)((n-l)!/2)=l!(n-l)!/4$, which implies that $L=l!(n-l)!/2$.  Case (b) is similar. 
\end{proofof}

Unfortunately, the Circle Test cannot be optimal for certain input sizes $n$. For instance, if $n=4$, the Circle Test can achieve an optimal soundness error probability of $1/4$ for a NO instance $(|\psi\rangle,|\psi\rangle,|\psi\rangle,|\psi^{\bot}\rangle)$, where $|\psi\rangle$ is an arbitrary state in ${\cal H}$ and $|\psi^{\bot}\rangle$ denotes a state orthogonal to $|\psi\rangle$, whereas another NO instance $(|\psi\rangle,|\psi^\bot\rangle,|\psi\rangle,|\psi^\bot\rangle)$ makes the Circle Test produce a soundness error probability of $1/2$ (which is far greater than $1/4$). 

In Section \ref{sec4}, we show that the Circle Test for $\mathrm{QSI}_n$ works asymptotically as good as the Permutation Test, if we incorporate additional classical preprocessing with the Circle Test.

\section{\label{sec3}Approximation of the Circle Test by the Swap Test}

We have shown in the previous section that the Permutation Test and the Circle Test are optimal quantum operations to solve the identity problem $\mathrm{QSI}_3$ with one-sided error probability. {}From a practical viewpoint, it would be ideal to build an identity test for $\mathrm{QSI}_3$ only with the Swap Test as a main quantum ingredient. This is mainly because 
the Swap Test is much simpler than the other two operations, and, more importantly, the Swap Test has been well-studied for its theoretical applications as well as its physical implementations ({\it e.g.}, see \cite{Du06,Hor05,WD07}). How can we develop such a test? A simple and natural approach is a sequential application of the Swap Test (which we refer to as a {\em Swap protocol}). More precisely, a Swap protocol ``classically'' chooses two quantum states for the Swap Test (out of three or more states) and applies the Swap Test to them as its only true ``quantum'' operation. In the following theorem, we present a certain Swap protocol for $\mathrm{QSI}_3$, which asymptotically achieves the same soundness error probability as the Circle Test does. 

\begin{theorem}\label{swap_3states}
Let $m$ be any positive number at least $2$. There exists a Swap protocol for $\mathrm{QSI}_3$, which achieves the soundness error probability of at most $1/3 + 1/4^{m-1}$ by applying the Swap Test $m$ times sequentially. 
\end{theorem}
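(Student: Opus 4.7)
The plan is to analyze a specific Swap protocol that cycles through the three pairs: apply the Swap Test successively to pairs $(1,2), (2,3), (1,3), (1,2), (2,3), (1,3), \ldots$ for $m$ rounds, outputting NOT EQUAL the moment any test does, and otherwise outputting EQUAL. Completeness is immediate since every test outputs EQUAL with probability $1$ on a YES instance. For soundness, following the reduction used in Proposition \ref{perm}, it suffices to bound the EQUAL probability on NO instances whose three indices partition into two orthogonal classes, giving either a $(2,1)$-split (two equal states, one orthogonal to them) or the fully orthogonal $(1,1,1)$-split.

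For the $(2,1)$-case, write the repeated state as $|\psi\rangle$ and the odd one as $|\psi^\bot\rangle$. The input lies in the three-dimensional subspace spanned by the orthonormal vectors $v_1 = |\psi\psi\psi^\bot\rangle$, $v_2 = |\psi\psi^\bot\psi\rangle$, $v_3 = |\psi^\bot\psi\psi\rangle$, which is preserved by every projector $P_{ij} = (I+\mathrm{SWAP}_{ij})/2$ (the POVM element for outcome EQUAL of the Swap Test on pair $(i,j)$). In the basis $(v_1,v_2,v_3)$ one checks that $P_{12}(a,b,c) = (a,(b+c)/2,(b+c)/2)$, and analogously for $P_{23}$ and $P_{13}$, so in particular the coordinate sum $s = a+b+c$ is preserved. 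Writing $(a,b,c) = (s/3)(1,1,1) + \vec{r}$ with $\vec{r}$ in the two-dimensional zero-sum subspace splits the state into orthogonal pieces: the first is a common $+1$-eigenvector of all three $P_{ij}$, while on the zero-sum subspace each $P_{ij}$ is the rank-one orthogonal projection onto a unit direction $\vec{u}_{ij}$, with $\vec{u}_{12}\propto(2,-1,-1)$, $\vec{u}_{23}\propto(1,1,-2)$, $\vec{u}_{13}\propto(-1,2,-1)$. A direct calculation gives $|\vec{u}_{ij}\cdot\vec{u}_{kl}| = 1/2$ for any two distinct pairs.

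The crux of the proof, and the step I expect to require the most care, is then the following geometric consequence: once the zero-sum part is aligned with some $\vec{u}_{ij}$, the next projection onto $\vec{u}_{kl}$ (with $\{k,l\}\neq\{i,j\}$) shrinks its magnitude by exactly $1/2$, so $\|\vec{r}\|^2$ drops by a factor of $1/4$. The initial zero-sum component at each of $v_1, v_2, v_3$ has squared norm $2/3$, and the worst sub-case is when its direction coincides with that of the first projector $P_{12}$ in the cycle (i.e., when $(1,2)$ is the equal pair): the first test is then ``wasted'' because the state is already symmetric in positions $1,2$, and the remaining $m-1$ tests each contribute the factor $1/4$, giving $\|\vec{r}_m\|^2 = \tfrac{2}{3}\cdot\tfrac{1}{4^{m-1}}$ and an EQUAL probability of $1/3 + \tfrac{2}{3\cdot 4^{m-1}} < 1/3 + 1/4^{m-1}$.

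Finally, one has to check that the remaining NO instances are not worse. The other two $(2,1)$ sub-cases start with $\vec{r}_0$ not aligned with $\vec{u}_{12}$, so no test is wasted and their bound improves by an extra factor of $1/4$. For the $(1,1,1)$-split, an analogous but six-dimensional decomposition of $|\psi_1\psi_2\psi_3\rangle$ into its fully symmetric component (of squared norm $1/6$) and its orthogonal complement shows that the EQUAL probability converges to $1/6$, with essentially the same $1/4$-per-step decay after an initial transient; in particular this case stays strictly below $1/3$ for every $m\geq 1$ and hence cannot dominate. Combining these sub-cases yields the stated bound of $1/3 + 1/4^{m-1}$.
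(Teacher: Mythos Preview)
Your proposal is correct and establishes the theorem, but via a genuinely different route from the paper's.

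\textbf{Protocol.} The paper's protocol SRS($m$) is \emph{randomized}: it picks the first pair uniformly at random, and at each subsequent step picks the leftover state together with a random one of the two just tested. Your protocol is \emph{deterministic}: a fixed cycle $(1,2),(2,3),(1,3),\ldots$. Both are legitimate Swap protocols in the sense of the paper.

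\textbf{Analysis.} The paper tracks the unnormalized post-selected state explicitly as a vector $(a_k+1)|1\rangle+(a_k+1)|2\rangle+a_k|3\rangle$ (or its mirror), derives the recurrence $a_{k+1}=2a_k+1$ and the conditional pass probability $p_k=1-6/(4^k+8)$ by brute-force amplitude bookkeeping (Lemma~\ref{lemma_pk}), and then solves $q_k=(1-6/(4^k+8))q_{k-1}$ to get $q_k=1/3+2/(3\cdot 4^k)$. Averaging over the random first choice yields the bound. Your argument is geometric: you split off the $(1,1,1)$-direction, observe that on the two-dimensional zero-sum complement each $P_{ij}$ is a rank-one projector onto a line $\vec u_{ij}$, and that any two of these lines meet at angle with $|\cos|=1/2$. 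This immediately gives the exact $1/4$ shrink per step once the state is aligned, without any recurrence-solving. The paper's randomization buys a slightly better constant (worst case $1/3+1/(3\cdot 4^{m-1})$ versus your $1/3+2/(3\cdot 4^{m-1})$), because it averages away the ``wasted first test'' sub-case; your deterministic protocol pays for that sub-case in full, but still meets the stated bound.

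\textbf{Two small points to tighten.} First, your claim that the $(1,1,1)$ case ``stays strictly below $1/3$ for every $m\ge 1$'' is false at $m=1$ (the error there is $1/2$); this is harmless since the theorem assumes $m\ge 2$, but you should restate it. Second, in the six-dimensional $(1,1,1)$ analysis the complement of the symmetric line is five-dimensional and the $P_{ij}$'s are rank-two there, so it is not literally ``analogous'' to the rank-one picture; the cleanest fix is to compute the $m=2$ error directly ($1/4$) and then invoke monotonicity ($\|P_{ij}\|\le 1$) to conclude the error is $\le 1/4<1/3$ for all $m\ge 2$, which is all you need.
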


\begin{proof}
Let $m\geq 2$ be fixed throughout this proof. Our desired Swap protocol for $\mathrm{QSI}_3$, referred to as SRS (Sequential Random Swap), is given as follows.

\medskip
{\bf Protocol SRS($m$)} 

Input: three quantum states $(|\psi_1\rangle,|\psi_2\rangle,|\psi_3\rangle)\in{\cal H}^{\otimes 3}$
\begin{enumerate}
\item Randomly choose two of the three states $|\psi_1\rangle,|\psi_2\rangle$ and $|\psi_3\rangle$. 

\item Repeat the following two steps $m$ times as long as the protocol does not halt. 

\hs{1}  (ii-1) Perform the Swap Test on the chosen two states. If the test outputs NOT EQUAL, 
output NO and halt.

\hs{1}  (ii-2) Choose the leftover state as well as one of the two resulting states at random.

\item Output YES.  
\end{enumerate}

If three input states are identical, then SRS($m$) obviously outputs YES with certainty. Consider the case where all the three input states are mutually orthogonal. Hereafter, we deal only with an arbitrary NO instance $(|\psi_1\rangle,|\psi_2\rangle,|\psi_3\rangle)\in{\cal H}^{\otimes 3}$. 
We first analyze the soundness error probability for $m=2$. In the protocol SRS($2$), the first Swap Test at Step (ii-1) outputs NO with probability exactly $1/2$, regardless of which states are chosen at Step (i). Without loss of generality, we assume that $|\psi_1\rangle$ and $|\psi_2\rangle$ are the chosen states 
at Step (i). After the first Swap Test outputs EQUAL at Step (ii-1), the resulting state is of the form  
$\frac{1}{\sqrt{2}}(|\psi_1\rangle|\psi_2\rangle+|\psi_2\rangle|\psi_1\rangle)$ since $|\psi_1\rangle$ and $|\psi_2\rangle$ are orthogonal. At Step (ii-2), we obtain two input states: the pure state $|\psi_3\rangle$ and the mixed state $\pmb{\rho}=\frac{1}{2}(|\psi_1\rangle\langle\psi_1|+|\psi_2\rangle\langle\psi_2|)$. These input states can be evaluated 
as EQUAL by the second Swap Test at Step (ii-1) with probability exactly 
$\frac{1}{2}+\frac{1}{2}\mathrm{Tr}(\pmb{\rho}|\psi_3\rangle\langle\psi_3|)=\frac{1}{2}$. Therefore, we obtain the correct answer NO at Step (ii-1) with probability exactly $\frac{1}{2}+\frac{1}{2}\cdot\frac{1}{2} = \frac{3}{4}$. This gives the soundness error probability of $1/4$, which is smaller than $1/3$. Since the soundness error probability of SRS($m$) decreases as $m$ becomes larger, we conclude that SRS($m$) has soundness error probability smaller than $1/3$.

The more complex case is that two input states are identical and the rest is orthogonal to them. Because of the symmetry of our protocol, we can assume that $|\psi_1\rangle=|\psi_3\rangle$ and $|\psi_2\rangle=|\psi_1^\bot\rangle$. We need to consider the following two cases.
\begin{itemize}
\item[(a)] $|\psi_1\rangle$ and $|\psi_2\rangle$ (or alternatively $|\psi_2\rangle$ and $|\psi_3\rangle$) are chosen at 
Step (i). 

\item[(b)] $|\psi_1\rangle$ and $|\psi_3\rangle$ are chosen at Step (i). 
\end{itemize}

We begin with Case (a). For notational convenience, we use the following abbreviations: 
$|1\rangle:=|\psi_1^\bot\rangle|\psi_1\rangle|\psi_1\rangle$, 
$|2\rangle:=|\psi_1\rangle|\psi_1^\bot\rangle|\psi_1\rangle$ and 
$|3\rangle:=|\psi_1\rangle|\psi_1\rangle|\psi_1^\bot\rangle$. 
It is not important for us to choose, at Step (ii-2), which of the two resulting states to apply the Swap Test, since if the protocol does not halt, after Step (ii-1), on, say,  the first and the second states, the obtained state is in the form: 
$\alpha(|1\rangle+|2\rangle)+\beta|3\rangle$. For simplicity, we assume that the second state is always chosen at Step (ii-2). 
For our further analysis, we need the following lemma. For readability, we ignore normalization factors of quantum states in the lemma. 

\begin{lemma}\label{lemma_pk}
Let $k$ be any number in $[m]$. Under the condition that the protocol does not halt after the $(k-1)$th Swap Test in Case (a), the (conditional) probability $p_k$ that the protocol does not halt after the $k$th Swap Test is $p_{k}=1-\frac{6}{4^{k}+8}$. The obtained (non-normalized) state can be represented as $(a_k+1)|1\rangle+(a_k+1)|2\rangle+a_k|3\rangle$, 
where $a_k=\frac{2}{3}(4^{(k-1)/2}-1)$, if $k$ is odd, and $a_k|1\rangle+(a_k+1)|2\rangle+(a_k+1)|3\rangle$, 
where $a_k=\frac{1}{3}(4^{k/2}-1)$, if $k$ is even. 
\end{lemma}

Meanwhile, we postpone the proof of this lemma. Let $q_k$ be the (accumulative) probability that the protocol does not halt 
after the $k$th Swap Test in Case (a). Since $q_1=p_1$ and $q_k=p_kq_{k-1}$ for any $k\geq 2$, Lemma \ref{lemma_pk} implies that $q_1=1/2$ and $q_k=\left(1-\frac{6}{4^k+8}\right)q_{k-1}$. These recurrence equations have a unique solution $q_k=\frac{1}{3}+\frac{2}{3\cdot 4^k}$ for any number $k\geq1$.  

Next, let us consider Case (b). Let $r_k$ be the (accumulative) probability that the protocol does not halt after the $k$th Swap Test in Case (b). Under our assumption, Case (b) can be analyzed in the same way as Case (a) if we replace $k$ in Case (a) by $k+1$, because the first Swap Test makes no effect on its subsequent computation. We then obtain that $r_1=1$ and $r_k=q_{k-1}$ for any number $k\geq 2$.

Notice that Case (a) holds with probability $2/3$ and Case (b) holds with probability $1/3$. Therefore, if the given input is a NO instance, where two of the three states are identical and the other is orthogonal to them, the protocol SRS($m$) 
outputs YES at Step (iii) with probability $(2/3)q_m+(1/3)r_{m}=1/3+1/4^{m-1}$, as requested. 

\begin{proofof}{Lemma \ref{lemma_pk}}
The proof is done by induction on $k\geq 1$. Let $p_k$ denote the probability that the protocol does not halt after $k$th Swap Test in Case (a). Consider the basis case $k=1$. After the first Swap Test, since we obtain the state $|1\rangle+|2\rangle$ with probability $1/2$, the protocol outputs NO with probability exactly $1/2$. Hence, we have $p_1=1/2$ and $a_1=0$. In the case of $k=2$, note that the Swap Test is applied to the second and third states. The total state including the first register (used by the quantum Fourier transform) evolves by the Swap Test as follows:
\begin{eqnarray*}
(|0\rangle+|1\rangle)(|1\rangle+|2\rangle) 
&\mapsto& |0\rangle (|1\rangle+|2\rangle)+|1\rangle(|1\rangle+|3\rangle)\\
&\mapsto& (|0\rangle+|1\rangle)(|1\rangle+|2\rangle)
+(|0\rangle-|1\rangle)(|1\rangle+|3\rangle)\\
& & =\ |0\rangle(2|1\rangle+|2\rangle+|3\rangle)+|1\rangle(|2\rangle-|3\rangle). 
\end{eqnarray*} 
Provided that the protocol does not halt, we obtain the state $2|1\rangle+|2\rangle+|3\rangle$, which yields $a_2=1$. The desired probability $p_2$ is thus calculated as 
\[
 p_2=1-\frac{1^2+(-1)^2}{2^2+1^2+1^2+1^2+(-1)^2}=3/4.
\] 
This yields the lemma for the case $k=2$. 

Next, let $k$ be any integer greater than $2$. First, we deal with the case where $k$ is odd. Assuming that the lemma holds for $k$, we want to show that the lemma also holds for $k+1$. By our induction hypothesis, we have the state $\ket{\psi_k}= (a_k+1)|1\rangle+(a_k+1)|2\rangle+a_k|3\rangle$ after the $k$th Swap Test, where $a_k=\frac{2}{3}(4^{(k-1)/2}-1)$. 
Note that the $(k+1)$th Swap Test is applied to the second and third states in $\ket{\psi_k}$. 
The Swap Test makes the total state evolve as follows:
\begin{eqnarray*}
& &\!\!\!\!\!\!\!\!\!\!\!\!\!\! (|0\rangle+|1\rangle)((a_k+1)|1\rangle+(a_k+1)|2\rangle+a_k|3\rangle) \\
&\mapsto& |0\rangle ((a_k+1)|1\rangle+(a_k+1)|2\rangle+a_k|3\rangle)\\
&       &   \ \ \ \    +|1\rangle ((a_k+1)|1\rangle +(a_k+1)|3\rangle +a_k|2\rangle)\\
&\mapsto& (|0\rangle+|1\rangle)((a_k+1)|1\rangle+(a_k+1)|2\rangle+a_k|3\rangle)\\
&       &   \ \ \ \   +(|0\rangle-|1\rangle)((a_k+1)|1\rangle+a_k|2\rangle+(a_k+1)|3\rangle)\\
& & =\ |0\rangle((2a_k+2)|1\rangle+(2a_k+1)|2\rangle+(2a_k+1)|3\rangle)
+|1\rangle(|2\rangle-|3\rangle). 
\end{eqnarray*}  
We then obtain the state $(2a_k+2)|1\rangle+(2a_k+1)|2\rangle+(2a_k+1)|3\rangle$ if the protocol does not halt. From this state, it immediately follows that $a_{k+1}=2a_k+1$. Therefore, $p_{k+1}$ has the value 
\begin{eqnarray*}
p_{k+1} &=& 1-\frac{1^2+(-1)^2}{(2a_k+2)^2+2(2a_k+1)^2+1^2+(-1)^2}\\
        &=& 1-\frac{2}{(a_{k+1}+1)^2+2a_{k+1}^2+2}.
\end{eqnarray*}
Since $a_{k+1}=2a_k+1 =\frac{1}{3}(4^{(k+1)/2}-1)$, we finally obtain $p_{k+1} = 1-\frac{6}{4^{k+1}+8}$. We thus conclude, from the induction hypothesis for $k$, that the lemma holds for $k+1$. A similar analysis verifies that the induction step also holds for any even number $k$. Therefore, the mathematical induction guarantees the correctness of the lemma. 
\end{proofof}

This completes the proof of the theorem.
\end{proof}

As a direct consequence of Theorem \ref{swap_3states}, we conclude that SRS is one of the best choices among all Swap protocols solving the problem  $\mathrm{QSI}_3$.

\section{\label{sec4}Approximation of the Permutation Test by the Circle Test}

This section compares the performances of the Circle Test and of 
the Permutation Test. First, we focus our attention on the Circle Test for $\mathrm{QSI}_n$, where  $n$ is a prime number. For such a number $n$, we can show that the Circle Test has the same performance for $\mathrm{QSI}_n$ as the Permutation Test does. This indicates that the Circle Test is 
a best quantum test for any ``prime'' input size $n$ among all one-sided error quantum operations for $\mathrm{QSI}_n$. 

\begin{proposition}
Let $n$ be a prime number. The Circle Test for $\mathrm{QSI}_n$ achieves the soundness error probability of at most $1/n$. 
\end{proposition}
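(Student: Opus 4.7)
The plan is to reuse the two-class reduction from the proof of Proposition~\ref{perm} and then exploit the fact that, when $n$ is prime, the cyclic group $C_n = \langle\sigma_c\rangle$ has no nontrivial proper subgroups.

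First I would fix an arbitrary NO instance $(|\psi_1\rangle,\ldots,|\psi_n\rangle)$ and, arguing exactly as in Proposition~\ref{perm}, reduce to the worst case in which the indices of the input states split into just two equivalence classes $I_1$ and $I_2$ with $|I_1|=l$, $|I_2|=n-l$, and $1\leq l\leq n-1$: states indexed by $I_1$ (respectively $I_2$) are identical, while states across the two classes are orthogonal. Under this reduction, Lemma~\ref{circle_test} gives that the product $\prod_{m=1}^{n}\langle\psi_m|\psi_{\sigma_c^k(m)}\rangle$ equals $1$ whenever $\sigma_c^k$ setwisely stabilizes $I_1$ (and hence $I_2$), and equals $0$ otherwise. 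Hence the soundness error probability equals $|H|/n$, where
\[
H \;=\; \{\,k\in\{0,1,\ldots,n-1\} \,:\, \sigma_c^k(I_1)=I_1\,\}.
\]

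Next I would observe that $H$ is a subgroup of $\mathbb{Z}/n\mathbb{Z}$ (it is clearly closed under addition mod $n$ and contains $0$). Since $n$ is prime, Lagrange's theorem forces $H$ to be either $\{0\}$ or the whole group $\mathbb{Z}/n\mathbb{Z}$. The key step is to rule out the latter: if $H=\mathbb{Z}/n\mathbb{Z}$, then in particular $\sigma_c(I_1)=I_1$, so $I_1$ would be an invariant subset of the single $n$-cycle $\sigma_c$. But the orbit of any element under an $n$-cycle is all of $[n]$, so the only $\sigma_c$-invariant subsets of $[n]$ are $\emptyset$ and $[n]$, contradicting $1\leq l\leq n-1$.

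Therefore $H=\{0\}$, which gives a soundness error probability of exactly $1/n$ in the two-class worst case, and hence at most $1/n$ for every NO instance. I do not expect any serious obstacle: the two-class reduction is already carried out in Proposition~\ref{perm}, and the group-theoretic step is a one-line application of primality. The only mild care needed is in checking that passing from a three-or-more-class NO instance to a two-class one still upper bounds the soundness error, but this is precisely the observation already used for the Permutation Test and transfers verbatim to the Circle Test since the formula in Lemma~\ref{circle_test} is also a sum of products of inner products with the same monotonicity behavior under class merging.
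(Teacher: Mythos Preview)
Your proof is correct and rests on the same group-theoretic core as the paper's: primality of $n$ forces the only cyclic shift fixing the partition to be the identity, so only the $k=0$ term survives in the formula of Lemma~\ref{circle_test}. The one organizational difference is that the paper does \emph{not} first reduce to the two-class case; it proves directly (its Lemma~\ref{lemma_orth}) that for every $k\in[n-1]$ some factor $\langle\psi_m|\psi_{\sigma_c^k(m)}\rangle$ vanishes, arguing via the orbit $\{\mu_1,\sigma_c^k(\mu_1),\sigma_c^{2k}(\mu_1),\ldots\}$ of an index in a proper equivalence class (orbit size divides $n$, hence equals $n$), which is the orbit-side dual of your stabilizer/Lagrange argument. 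Your two-class reduction is thus unnecessary here, though harmless, and your treatment is otherwise equivalent to the paper's.
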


\begin{proof}
Let $n$ be any prime number and let $(|\psi_1\rangle,\ldots,|\psi_n\rangle)$ be any instance of the identity problem $\mathrm{QSI}_n$. Lemma \ref{circle_test} implies that the Circle Test outputs EQUAL on the instance 
with the probability   
$
p =  \frac{1}{n}\sum_{k=0}^{n-1}\prod_{m=1}^n\langle\psi_m|
\psi_{\sigma_c^k(m)}\rangle.
$ 
If $(|\psi_1\rangle,\ldots,|\psi_n\rangle)$ is an YES instance, then it is straightforward to show that $p=1$. Next, we consider the case where $(|\psi_1\rangle,\ldots,|\psi_n\rangle)$ is a NO instance. Now, we claim the following.

\begin{lemma}\label{lemma_orth}
Let $(|\psi_1\rangle,\ldots,|\psi_n\rangle)$ be any NO instance. 
For any number $k\in[n-1]$, there exists an index $m\in[n]$ such that $\langle\psi_m|\psi_{\sigma_c^k(m)}\rangle=0$. 
\end{lemma}

{}From this lemma, it follows that the probability $p$ equals $\frac{1}{n}\prod_{m=1}^n\langle\psi_m|\psi_{\sigma_c^0(m)}\rangle$. Therefore, the Circle Test outputs EQUAL on $(|\psi_1\rangle,\ldots,|\psi_n\rangle)$ with probability  
$\frac{1}{n}\prod_{m=1}^n\langle\psi_m|\psi_{\sigma_c^0(m)}\rangle$, which is clearly upper-bounded by $1/n$.

To complete the proof of the proposition, we need to prove Lemma \ref{lemma_orth}. Let us assume, toward a contradiction, that the lemma fails. By the promise of $\mathrm{QSI}_n$, there exists a number $k\in[n-1]$ such that, for any number $m\in[n]$, 
$
|\psi_m\rangle=|\psi_{\sigma_c^k(m)}\rangle.
$ 
Since $(|\psi_1\rangle,\ldots,|\psi_n\rangle)$ is a NO instance, there exist two indices $\mu$ and $\mu'$ for which 
$\langle\psi_\mu|\psi_{\mu'}\rangle=0$. This yields the existence of a proper subset $I=\{{\mu_1},{\mu_2},\ldots\}$ of $[n]$ satisfying that $|\psi_\mu\rangle=|\psi_\nu\rangle$  for any pair $\mu,\nu\in I$, and  $\langle\psi_\mu|\psi_\nu\rangle=0$ for any $\mu\in I$ and $\nu\in [n]\setminus I$. Choose $\mu_1$ in $I$. 
Since $|\psi_m\rangle=|\psi_{\sigma_c^k(m)}\rangle$ for any $m\in[n]$, we obtain 
$
|\psi_{\mu_1}\rangle=|\psi_{\sigma_c^k(\mu_1)}\rangle=|\psi_{\sigma_c^{2k}(\mu_1)}\rangle=\cdots.
$ 
Let $S=\{\mu_1,\sigma_c^k(\mu_1),\sigma^{2k}_c(\mu_1),\ldots\}$. It follows from the definition of $I$ that $S \subseteq I$. Since the set $S$ is the $\mathbb{Z}_n$-orbit with respect to $\mu_1$, its cardinality is a divisor of $n$. The ``prime'' condition of $n$ concludes that $[n] = S$. Since $S\subseteq I$, we have $I=[n]$, which contradicts our assumption that $I$ is a proper subset of $[n]$. This completes the proof of the lemma and thus completes the proof of the proposition.
\end{proof}

For an arbitrary input size $n$, how good is the performance of the Circle Test for $\mathrm{QSI}_n$, compared to the Permutation Test? Under the one-sided error requirement, as seen in Section \ref{sec2}, the Circle Test,  in general, cannot be optimal for $\mathrm{QSI}_n$. Nevertheless, it is possible to give a simple and almost optimal protocol, called RCIR (Randomized Circle Test), which uses the Circle Test only once after the classical processing of permuting $n$ quantum states randomly. 

\medskip
{\bf Protocol RCIR} 

Input: $n$ quantum states $(|\psi_1\rangle,|\psi_2\rangle,\ldots,|\psi_n\rangle)\in{\cal H}^{\otimes n}$
\begin{enumerate}
\item Permute the input quantum states by a randomly chosen permutation $\tau\in S_n$. Let   $(|\phi_1\rangle,\ldots,|\phi_n\rangle)$, where $|\phi_j\rangle=|\psi_{\tau(j)}\rangle$, be the resulting quantum states. 

\item Apply the Circle Test to $(|\phi_1\rangle,\ldots,|\phi_n\rangle)$. 
\end{enumerate}

We show that the protocol RCIR is an ``asymptotically'' optimal quantum operation for $\mathrm{QSI}_n$ up to a constant multiplicative factor of nearly $\pi^2/6$.

\begin{theorem}\label{general_n}
The protocol RCIR meets the one-sided error requirement and achieves the soundness error probability of 
at most $\pi^2/6n+O(1/n^2)\leq 1.7/n+O(1/n^2)$.
\end{theorem}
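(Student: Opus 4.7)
The plan is to analyze the soundness error of RCIR by expressing it as an expectation over the uniformly random preprocessing permutation $\tau\in S_n$, then exploiting the orbit structure of the powers of $\sigma_c$ to evaluate each term. The completeness error is handled immediately: on any YES instance, Step~1 of RCIR returns $n$ copies of the same pure state, so by Lemma~\ref{circle_test} the Circle Test in Step~2 accepts with certainty, meeting the one-sided error requirement.

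For the soundness bound on NO instances, I would first reduce to a canonical form using the same merging argument as in the proof of Proposition~\ref{perm}: merging any two orthogonal equivalence classes can only inflate the acceptance probability, so it suffices to treat inputs whose $n$ indices split into exactly two orthogonal equivalence classes $(I_1,I_2)$ with $|I_1|=l$ for some $l\in\{1,\ldots,n-1\}$. Combining Lemma~\ref{circle_test} with the randomness of $\tau$,
\[
p_{\mathrm{avg}} \;=\; \frac{1}{n}\sum_{k=0}^{n-1}\Pr_{\tau}\!\left[\prod_{m=1}^{n}\langle\psi_{\tau(m)}|\psi_{\tau(\sigma_c^k(m))}\rangle = 1\right].
\]
Since each inner product is $0$ or $1$, the event in brackets is precisely that every orbit of $\sigma_c^k$ is mapped by $\tau$ entirely into $I_1$ or entirely into $I_2$. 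Writing $d:=\gcd(n,k)$ and $s:=n/d$ (so $\sigma_c^k$ has $d$ disjoint orbits each of size $s$), a direct count shows that this inner probability equals $\binom{d}{l/s}/\binom{n}{l}$ when $s\mid l$, and is zero otherwise. Grouping the indices $k$ by their common value of $s$ and using the standard fact $|\{k\in\{0,\ldots,n-1\}:n/\gcd(n,k)=s\}|=\phi(s)$ for each divisor $s$ of $n$, I obtain
\[
p_{\mathrm{avg}} \;=\; \frac{1}{n}\!\left(1 + \sum_{\substack{s\mid n,\,s\ge 2\\ s\mid l}}\phi(s)\,\frac{\binom{n/s}{l/s}}{\binom{n}{l}}\right).
\]

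The main remaining step, and the crux of the proof, is to bound the inner sum by $\pi^2/6 - 1 + O(1/n)$ uniformly in the adversary's choice of $l\in\{1,\ldots,n-1\}$. I would observe that for each fixed divisor $s$ of $n$, the ratio $\binom{n/s}{l/s}/\binom{n}{l}$ is maximized at the extreme value $l=s$, where it equals $(s-1)!/((n-1)(n-2)\cdots(n-s+1))$, a quantity of order $1/n^{s-1}$ for fixed $s$. Combining with $\phi(s)\le s-1$ and standard falling-factorial estimates shows that each term is at most $1/s^2$ up to a multiplicative $(1+O(1/n))$ factor, while the contributions from large divisors $s\sim n$ are super-polynomially small (because $\binom{n}{l}$ dwarfs the numerator) and are absorbed into the $O(1/n^2)$ remainder. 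Summing $\sum_{s\ge 2}1/s^2 = \pi^2/6-1$ then yields $p_{\mathrm{avg}}\le \pi^2/(6n) + O(1/n^2)$, as desired.
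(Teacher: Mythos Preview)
Your approach is correct and takes a genuinely different route from the paper's. By linearity of expectation over $k$, you read off the probability that a given $k$ ``works'' directly from the orbit structure of $\sigma_c^k$ and group the $k$'s by orbit size $s=n/\gcd(n,k)$ with multiplicity $\phi(s)$; this yields the exact closed form $p_{\mathrm{avg}}=\frac{1}{n}\bigl(1+\sum_{s\mid\gcd(n,l),\,s\ge2}\phi(s)\binom{n/s}{l/s}/\binom{n}{l}\bigr)$. The paper instead conditions on the \emph{repetition number} $|K|$, needs a separate structural lemma (Lemma~\ref{lemma_easy}) to show that $K$ is a cyclic subgroup of $\mathbb{Z}_n$, and only upper-bounds the probability that $|K|=s$ by the same binomial ratio; your $s$-term is exactly $\phi(s)/s\le1$ times the paper's, so your decomposition is sharper and bypasses Lemma~\ref{lemma_easy} entirely. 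For the final estimate the paper invokes a three-case technical lemma (Lemma~\ref{lemma_eva}, splitting on $s\le r/3$, $s=r/2$, $s=r$); your alternative---maximize the binomial ratio over $l$ (indeed attained at $l=s$, as a ratio test $g(j)/g(j-1)=\prod_{i=1}^{s-1}\frac{(j-1)s+i}{n-(j-1)s-i}$ confirms) and bound by $1/s^2$---also works, but note that the claimed inequality $\phi(s)(s-1)!\big/\prod_{i=1}^{s-1}(n-i)\le 1/s^2$ fails for a handful of small pairs such as $(n,s)=(6,3)$, so the ``up to a multiplicative $(1+O(1/n))$'' must actually be argued rather than asserted. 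Once that uniformity is checked, both proofs close identically via $\sum_{s\ge2}1/s^2=\pi^2/6-1$.
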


\begin{proof}
With the same reasoning given in the proof of Proposition \ref{perm}, it suffices to analyze only NO instances $(|\psi_1\rangle,\ldots,|\psi_n\rangle)$ whose indices are divided into two sets  $I_1$ and $I_2$ such that any two states with indices in $I_1$ (also, $I_2$) are identical and any pair of states, one of which has an index in $I_1$ and the other has an index in $I_2$, is orthogonal. In what follows, we call a state whose index is in $I_1$ (resp. $I_2$) an {\em $I_1$-state} (resp. {\em $I_2$-state}). Let $I_1$ and $I_2$ be such sets of indices of the permuted states $\{|\phi_1\rangle,\ldots,|\phi_n\rangle\}$ obtained at Step (i) of the protocol RCIR. For convenience, let $I_1=\{\mu_1,\mu_2,\ldots,\mu_r\}$ with $\mu_1<\mu_2<\cdots<\mu_r$ and $I_2=[n]\setminus I_1$, where $r\in [n-1]$. 
Without loss of generality, we assume that $|I_1|\leq |I_2|$; namely, $r\leq n/2$. Let us also assume that there are exactly $s$ elements $k_1=0,k_2,\ldots,k_s\in\{0,1,\ldots,n-1\}$ such that each number $k\in\{k_1,\ldots,k_s\}$ satisfies $\langle\phi_m|\phi_{\sigma_c^k(m)}\rangle =1$ for any number $m\in[n]$. Lemma \ref{circle_test} concludes that the soundness error probability of the protocol equals $s/n$. The following lemma is easily proven. 

\begin{lemma}\label{lemma_easy}
Let $K=\{k_1,k_2,\ldots,k_s\}$. 

(i) For any $m\geq 1$, if $k'\in K$ then so is $mk'$. 

(ii) If $k',k''\in K$ then so is $\mathrm{GCD}(k',k'')$.    
\end{lemma}

By Lemma \ref{lemma_easy}, the set $K=\{k_1,k_2,k_3,\ldots,k_s\}$ can be of the form $k_1=0,k_2=k,k_3=2k,\ldots,k_s=(s-1)k$ for the divisor $k$ $(=n/s)$ of $n$. For convenience, we call $s$ and $k$ the {\em repetition number} and the {\em cycle size}, respectively.  
 
\begin{figure}[t]
\begin{center}
\includegraphics*[width=4.0cm]{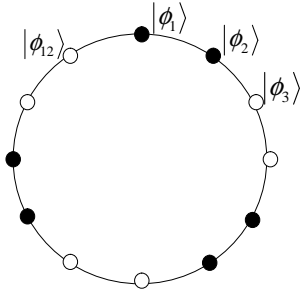}
\caption{An example of cyclic alignment of all $I_1$-states with $n=12$ and $|I_1|=6$}\label{figure1} 
\end{center}
\end{figure}

To help the reader, let us see an example. Figure \ref{figure1} renders a cyclic alignment of all $I_1$-states with parameters $n=12$ and $r=6$, where the repetition number $s$ is $3$ and the cycle size $k$ is $4$. A black node (resp.\ white node) indicates an $I_1$-state (resp.\ $I_2$-state). Each cyclic alignment of $I_1$-states induces an {\em $I_1$-pattern}, 
which is a bit string $(b_1,\ldots,b_{k})$ defined by $b_i=1$ if $i\in I_1$ and $0$ otherwise. In Figure \ref{figure1}, this $I_1$-pattern is $(1,1,0,0)$. By the definition of cycle size $k$, such an $I_1$-pattern uniquely characterizes a cyclic alignment of $I_1$-states as follows: for any $j\in\{0,1,\ldots,s-1\}$ and $i\in[k]$, the index $jk+i\in [n]$ is in $I_1$ if $i\in I_1$, and in $I_2$ if $i\in I_2$. Note that the Hamming weight of $(b_1,\ldots,b_k)$, which indicates the number of indices in $[k]\cap I_1$, is exactly $|I_1|/s=r/s$.

Now, we return to our proof. We wish to show that the soundness error probability for any fixed $r$ is at most $\pi^2/6n+O(1/n^2)$. Let $p_{s}$ be the probability that a cyclic alignment of $I_1$-states with repetition number $s$ is chosen by the protocol RCIR. Note that, as far as $\mathrm{GCD}(n,r)=1$, $s$ equals $1$; hence, we have $p_1=1$. This implies that the soundness error probability equals $1/n$. Since a cyclic alignment of all $I_1$-states is randomly chosen, it follows that, for $s\geq 2$, $p_s\leq \frac{{k\choose r/s}}{{n\choose r}}=\frac{{n/s\choose r/s}}{{n\choose r}}$. Recall that any cyclic alignment of $I_1$-states with repetition number $s$ produces the soundness error probability of $s/n$. Therefore, the total soundness error probability is at most
\begin{equation}\label{eq_upper}
p_1\cdot\frac{1}{n}+\sum_{s:\ s|n}\frac{{n/s\choose r/s}}{{n\choose r}}\cdot\frac{s}{n}
\leq \frac{1}{n}+\sum_{s:\ s|n}\frac{{n/s\choose r/s}}{{n\choose r}}\cdot\frac{s}{n}.
\end{equation}

To upper-bound Eq.(\ref{eq_upper}) further, we need the following technical lemma. Recall that $s$ is a divisor of $r$. For convenience, let $q(n,r,s)=\frac{{n/s\choose r/s}}{{n\choose r}}\cdot\frac{s}{n}$.

\begin{lemma}\label{lemma_eva}
The value $q(n,r,s)$ is at most $\frac{1}{ns^2}$ if $s\leq r/3$; $\frac{6}{(n-1)(n-2)(n-3)}$ if $s=r/2$; and $\frac{2}{n(n-1)}$ if $s=r$. 
\end{lemma}

We continue our argument. Lemma \ref{lemma_eva} helps us upper-bound the right-hand expression in Eq.(\ref{eq_upper}) as
\begin{eqnarray*}
\lefteqn{\frac{1}{n}+\sum_{s:\ s|n}\left(\frac{1}{ns^2}\right)+O(1/n^2)} \hs{5} \\
&\leq& \frac{1}{n}+\sum_{s=2}^\infty \left(\frac{1}{ns^2}\right)+O(1/n^2) 
\;\;=\;\; \frac{1}{n}+\frac{\pi^2/6-1}{n}+O(1/n^2).
\end{eqnarray*}
The last expression is clearly equal to $\pi^2/6n +O(1/n^2)$, as requested. 

What remains is to prove Lemma \ref{lemma_eva}. Consider the first case $s=r$. In this case, we have $q(n,r,s)
=\frac{{n/s \choose 1}}{{n\choose s}}\cdot\frac{s}{n}=\frac{1}{{n\choose s}}$, which is bounded from above by 
$\frac{1}{{n\choose 2}}=\frac{2}{n(n-1)}$ because $s\geq 2$. Let us consider the second case $s=r/2$. 
The expression $q(n,r,s)$ is further calculated as
\begin{equation}\label{eq_r2s}
q(n,r,s) = \frac{{n/s\choose 2}}{{n\choose 2s}}\cdot\frac{s}{n}
=\frac{2s(2s-1)\cdots 1}{n(n-1)\cdots (n-2s+1)} \cdot \frac{n/s-1}{2}. 
\end{equation}
Noting that $2s\geq 4$, $2s=r\leq n/2$ and $n-2s+1\geq n/2+1$, it follows from Eq.(\ref{eq_r2s}) that $q(n,r,s)$ 
is at most
\begin{eqnarray*}
\lefteqn{\frac{1\cdot 2\cdot 3\cdot 4\cdots 2s}{n(n-1)(n-2)(n-3)\cdots (n-2s+1)}\cdot\frac{n}{2s}}\hs{5}\\
&\leq& \frac{1\cdot 2\cdot 3\cdot 4}{n(n-1)(n-2)(n-3)}\cdot\frac{n}{4} =\frac{6}{(n-1)(n-2)(n-3)}.
\end{eqnarray*}
In the final case $s\leq r/3$, the expression $q(n,r,s)$ equals
\begin{eqnarray*}
q(n,r,s) &=& \frac{{n/s\choose r/s}}{{n\choose r}}\cdot\frac{s}{n}
=\frac{\frac{(n/s)(n/s-1)\cdots(n/s-r/s+1)}{(r/s)(r/s-1)\cdots 1}}{\frac{n(n-1)\cdots (n-r+1)}{r(r-1)\cdots 1}}
\cdot\frac{s}{n}\\
&=& \frac{s\cdot\frac{n}{s}\cdots\left(\frac{n}{s}-\frac{r}{s}+1\right)\cdot r\cdots 1}{n\cdot n\cdots 
\left(n-r+1\right)(\frac{r}{s}\cdots 1)},
\end{eqnarray*}
which is clearly at most
\begin{eqnarray*} 
\frac{s}{n}\left(\frac{n/s}{n}\right)^{r/s}\frac{r\cdots(\frac{r}{s}+1)}{(n-\frac{r}{s})\cdots(n-r+1)}.
\end{eqnarray*}
This expression is further upper-bounded by 
$\frac{s}{n}(\frac{1}{s})^{r/s}$, 
since $2\leq s\leq r\leq n/2$ implies 
$\frac{r}{n-\frac{r}{s}}\leq 2/3$. 
{}From our assumption $s\leq r/3$, it therefore follows that  
\[
q(n,r,s)\leq\frac{s}{n}\left(\frac{1}{s}\right)^{r/s}\leq\frac{s}{n}\left(\frac{1}{s}\right)^3=\frac{1}{ns^2}.
\]
This ends the proof of the lemma and thus the proof of Theorem \ref{general_n}.
\end{proof}

\section{\label{sec5}Closing Discussion}

The Swap Test has been widely used in the literature to test the identity of two quantum states. In this paper, we have studied two additional tests, the Permutation Test and the Circle Test, which generalize the Swap Test. We have analyzed the performances of these two tests for the quantum state identity problem, $\mathrm{QSI}_n$, under the one-sided error requirement. Throughout this paper, we have restricted our attention to the identity problem's {\em promise} (in the definition of $\mathrm{QSI}_n$) and also the {\em one-sided error requirement}. These restrictions make our analysis easier; nevertheless, the restrictions can be relaxed. We briefly discuss how our result can be applied to less constrained situations. 
 
The promise of our identity problem $\mathrm{QSI}_n$ demands that any pair of quantum states is identical or orthogonal. 
By relaxing the latter orthogonality, we can consider the following weak form of an identity problem, denoted $\mathrm{QSI}_n^\varepsilon$, in which we want to determine either (a) all $n$ quantum states are identical or (b) there are two states whose inner product is less than or equal to $\varepsilon$, provided that either (a) or (b) holds. This problem $\mathrm{QSI}_n^\varepsilon$ was dealt with in a fingerprinting protocol 
in \cite{BCWW01}. Our results in this paper still provide a good proximity of the three tests to the problem $\mathrm{QSI}_n^\varepsilon$ since $\mathrm{QSI}_n$ coincides with $\mathrm{QSI}_n^\varepsilon$ when $\varepsilon=0$.

Our one-sided error requirement requests that the completeness error probability should be $0$. This requirement naturally occurs in the literature regarding the Swap Test ({\it e.g.}, \cite{BCWW01,Bea04,SWS07}). As a natural relaxation of this requirement, when we allow non-zero completeness error probability, we obtain the {\em two-sided error requirement}. Even with this relaxed requirement, we can claim that the Swap Test is ``optimal'' in the sense that the Swap Test achieves the largest gap between the probabilities that EQUAL is outputted on YES instances and on NO instances. 
This claim can be shown by a {\em trace-norm distance argument} as follows. 

Consider the two YES instances $(|0\rangle,|0\rangle)$ and $(|1\rangle,|1\rangle)$,  
where each input is a single qubit. For simplicity, let us denote them by $|00\rangle$ and $|11\rangle$, respectively. 
Similarly, consider two NO instances $|+-\rangle$ and $|-+\rangle$, where 
$|+\rangle=(|0\rangle+|1\rangle)/\sqrt{2}$ and $|-\rangle=(|0\rangle-|1\rangle)/\sqrt{2}$. 
Now, let $\pmb{\rho}_y=\frac{1}{2}(|00\rangle\langle00|+|11\rangle\langle 11|)$ 
and $\pmb{\rho}_n=\frac{1}{2}(|+-\rangle\langle +-|+|-+\rangle\langle -+|)$. 
Write $p_c$ and $p_s$ for the completeness and soundness error probabilities,  respectively, of the test.
There is a POVM $M$ such that the $l_1$-norm gap $GAP$ between two probability distributions 
obtained by $M$ on $\pmb{\rho}_y$ and $\pmb{\rho}_n$ is at least $|(1-p_c)-p_s|+|(1-p_s)-p_c|=2-2p_c-2p_s$. 
On the contrary, since the trace-norm distance between $\pmb{\rho}_y$ and $\pmb{\rho}_n$ is $1/2$, 
the value $GAP$ should be at most $1$ \cite{ANK98} 
(see also \cite{KMY03}). This yields the inequality  
$p_c+p_s\geq 1/2$. Notice that the Swap Test achieves the equality $p_c+p_s=1/2$. Therefore, the Swap Test is optimal even in the two-sided error requirement. 

With a similar argument for the Circle Test for $\mathrm{QSI}_3$, 
we can prove that the Circle Test is also ``optimal'' with two-sided error probability. 
On the contrary, the optimality of the Permutation Test under the two-sided error requirement is currently open. We expect the optimality of the Permutation Test; however, it is likely that the trace-norm distance argument  for the Permutation Test 
is insufficient to prove the optimality under the two-sided error requirement.

Another interesting open question in line of our work is to seek an efficient approximation of the Permutation Test for $\mathrm{QSI}_n$ by use of a certain Swap protocol that runs the Swap Test $O(n)$ times. Such a Swap protocol provides an ideal construction of a quantum circuit that implements the Permutation Test since it is much more concise than the direct construction of the Permutation Test based on the decomposition of the Fourier transform $F_{n!}$ over $n!$ elements.

%%%%%%%%%%%%%%%%%%%%%%%%%%%%%%%%%

\paragraph{\large Acknowledgments:} 
We are grateful to Masahiro Hotta and Masanao Ozawa for sending us 
their unpublished manuscript that became a basis of our proof of Proposition \ref{perm_opt}. 

%%%%%%%%%%%%%%%%%%%%%%%%%%%%%%%%%%%%
%\bibliographystyle{alpha}

%%%%%%%%%%%%%%%%%%%%%%%%%
%%%%%%%%%%%%%%%%%%%%%%%%%%%%%%%%%%%%%%%%%%%%%%%
\end{document}